\begin{document}

\title{A Block-Sensitivity Lower Bound for Quantum Testing Hamming Distance}
\author{Marcos Villagra}
\institute{N\'ucleo de Investigaci\'on y Desarrollo Tecnol\'ogico\\
	Universidad Nacional de Asunci\'on, Paraguay
	\\
	\email{mvillagra@pol.una.py}}
\maketitle

\begin{abstract}
\noindent The Gap-Hamming distance problem is the promise problem of deciding if the Hamming distance $h$ between two strings of length $n$ is greater than $a$ or less than $b$, where the gap $g=|a-b|\geq 1$ and $a$ and $b$ could depend on $n$. In this short note, we give a lower bound of $\Omega( \sqrt{n/g})$ on the quantum query complexity of computing the Gap-Hamming distance between two given strings of lenght $n$. The proof is a combinatorial argument based on block sensitivity and a reduction from a threshold function.

\keywords{quantum query complexity, gap-Hamming distance, block-sensitivity.}
\end{abstract}

\section{Introduction}

A generalized definition of the Hamming distance is the following: given two strings $x$ and $y$, decide if the Hamming distance $h(x,y)$ is greater than $a$ or less than $b$, with the condition that $b<a$. Note that this definition gives a partial boolean function for the Hamming distance with a gap. There is a entire body of work on the computation of a particular case of this notion of Hamming distance in the decision tree and communication models known as the {\it Gap-Hamming distance} (GHD) problem, which asks to differentiate the cases $h(x,y)\leq n/2-\sqrt{n}$ and $h(x,y)\geq n/2+\sqrt{n}$ \cite{Woo07}. A lower bound on GHD implies a lower bound on the memory requirements of computing the number of distinct elements in a data stream \cite{IW03}. Chakrabarti and Regev \cite{CR11} give a tight lower bound of $\Omega(n)$; their proof was later improved by Vidick \cite{Vid10} and then by Sherstov \cite{She11}. For the Hamming distance with a gap of the form $n/2\pm g$ for some given $g$, Chakrabarti and Regev also prove a tight lower bound of $\Omega(n^2/g^2)$. In the quantum setting, there is a communication protocol with cost $\mathcal{O}(\sqrt{n}\log n)$ \cite{BCW98}.

Suppose we are given oracle access to input strings $x$ and $y$. In this note, we prove a lower bound on the number of queries to a quantum oracle to compute the Gap-Hamming distance with an arbitrary gap, that is, for any given $g=a-b$.

\begin{theorem}\label{the:lower-bound}
Let $x,y\in\{0,1\}^n$ and $g=a-b$ with $0\leq b<a\leq n$. Any quantum query algorithm for deciding if $h(x,y)\geq a$ or $h(x,y)\leq b$ with bounded-error, with the promise that one of the cases hold, makes at least $\Omega( \sqrt{n/g} )$ quantum oracle queries.
\end{theorem}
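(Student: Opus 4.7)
The plan is to reduce GHD to the OR function on $\Theta(n/g)$ bits via a block-sensitivity embedding, and then invoke the $\Omega(\sqrt{k})$ quantum query lower bound for $OR_{k}$. First I would fix one of the inputs, say $y = 0^n$, so that $h(x,y) = |x|$. Any GHD algorithm then solves, with the same number of queries, the partial gap-threshold function $T_{a,b}$ which outputs $1$ if $|x| \geq a$ and $0$ if $|x| \leq b$; so it suffices to lower bound $Q(T_{a,b})$.

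Next I would argue that $T_{a,b}$ has block sensitivity $\Omega(n/g)$. Since $a > b$, at least one of $b \leq n/2$ or $a \geq n/2$ holds. In the first case, pick a reference input $x_0$ of Hamming weight exactly $b$ and carve out $k = \lfloor (n-b)/g \rfloor$ pairwise disjoint blocks $B_1,\dots,B_k$ of size $g$ from its zero-positions; note $k \geq \lfloor n/(2g) \rfloor = \Omega(n/g)$. For any $S \subseteq [k]$, the string $x_0 \oplus \bigcup_{i\in S} B_i$ has weight $b + |S|g$, which equals $b$ when $S = \emptyset$ and lies in $[a,n]$ otherwise, so every such string satisfies the promise. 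Hence $T_{a,b}$, restricted to these $2^k$ inputs, is exactly $OR_k$, and in particular $\mathrm{bs}(T_{a,b}) \geq k$. The symmetric case $a \geq n/2$ is handled by taking $x_0$ of weight $a$, carving $k = \lfloor a/g \rfloor = \Omega(n/g)$ blocks out of the one-positions of $x_0$, and observing that the restriction becomes $NOR_k$.

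The last step is to apply the Beals--Buhrman--Cleve--Mosca--de~Wolf bound $Q(f) = \Omega(\sqrt{\mathrm{bs}(f)})$ (equivalently, the BBBV $\Omega(\sqrt{k})$ lower bound for unstructured search applied to the embedded $OR_k$) to conclude $Q(\mathrm{GHD}) \geq Q(T_{a,b}) = \Omega(\sqrt{n/g})$. The only place where care is needed, and the reason for the case split, is ensuring that all flipped inputs land inside the promise region: the choice $k = \lfloor (n-b)/g \rfloor$ guarantees $b + kg \leq n$ in the first case, while $k = \lfloor a/g \rfloor$ guarantees $a - kg \geq 0$ in the second; without the split on $b \leq n/2$ versus $a \geq n/2$, one of the two constructions could be forced down to $k = o(n/g)$ and the argument would not give the claimed bound.
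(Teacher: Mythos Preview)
Your proof is correct and follows essentially the same route as the paper: reduce Gap-Hamming to the gap-threshold function $GapThr_{a,b}$ by fixing $y=0^n$, exhibit $\Omega(n/g)$ pairwise disjoint sensitive blocks at a reference input of weight $b$ (respectively $a$), and then apply the Beals--Buhrman--Cleve--Mosca--de~Wolf bound $Q(f)=\Omega(\sqrt{\mathrm{bs}(f)})$. Your explicit case split on $b\le n/2$ versus $a\ge n/2$ (ensuring that at least one of $\lfloor (n-b)/g\rfloor$ and $\lfloor a/g\rfloor$ is $\Omega(n/g)$) and the phrasing via an embedded $OR_k$/$NOR_k$ are a touch more careful than the paper's write-up, but the underlying argument is identical.
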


The proof is a combinatorial argument based on block sensitivity. The key ingredient is a reduction from a a threshold function. A previous result of Nayak and Wu \cite{NW99} implies a tight lower bound of $\Omega(\sqrt{n/g}+\sqrt{h(n-h)}/g)$; their proof, however, is based on the polynomial method of Beals \emph{et al.} \cite{BBC01} and it is highly involved. The proof presented here, even though it is not tight, is simpler and requires no heavy machinery from the theory of polynomials.

\section{Proof of Theorem \ref{the:lower-bound}}

Let $a,b$ be such that $0\leq b <a\leq n$. Define the partial boolean function $GapThr_{a,b}$ on $\{0,1\}^n$ as
\begin{equation}
GapThr_{a,b}(x)=\left\{ \begin{matrix}
	1	& \text{ if } |x| \geq a\\
	0	& \text{ if } |x| \leq b.
\end{matrix} \right.
\end{equation}

To compute $GapThr_{a,b}$ for some input $x$, it suffices to compute the Hamming distance between $x$ and the all 0 string. Thus, a lower bound for Gap-Hamming distance  follows from a lower bound for $GapThr_{a,b}$.

Let $f:\{0,1\}^n\to \{0,1\}$ be a function, $x\in \{0,1\}^n$ and $B\subseteq \{1,\dots,n\}$ a set of indices called a block. Let $x^B$ denote the string obtained from $x$ by flipping the variables in $B$. We say that $f$ is \emph{sensitive} to $B$ on $x$ if $f(x)\neq f(x^B)$. The block sensitivity $bs_x(f)$ of $f$ on $x$ is the maximum number $t$ for which there exist $t$ disjoint sets of blocks $B_1,\dots,B_t$ such that $f$ is sensitive to each $B_i$ on $x$. The {\it block sensitivity} $bs(f)$ of $f$ is the maximum of $bs_x(f)$ over all $x\in\{0,1\}^n$.

From Beals \emph{et al.} \cite{BBC01} we know that the square root of block sensitivity is a lower bound on the bounded-error quantum query complexity. Thus, Theorem \ref{the:lower-bound} follows inmediately from the lemma below.

\begin{lemma}\label{lem:lower-bound}
$bs(GapThr_{a,b})=\Theta(n/g)$.
\end{lemma}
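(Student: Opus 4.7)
The plan is to prove both directions of the $\Theta(n/g)$ bound by direct combinatorial counting.

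For the upper bound $bs(GapThr_{a,b}) = O(n/g)$, I would fix any $x$ in the domain and any collection of disjoint sensitive blocks $B_1,\ldots,B_t$. If $f(x)=0$, so $|x|\leq b$, then each $B_i$ must push the Hamming weight up to at least $a$. Since flipping $B_i$ changes $|x|$ by (zeros of $x$ in $B_i$) minus (ones of $x$ in $B_i$), each $B_i$ must contain at least $g$ zero-positions of $x$, so $|B_i|\geq g$. The symmetric argument handles $f(x)=1$. Disjointness then gives $tg\leq \sum_i |B_i|\leq n$, hence $t\leq n/g$.

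For the matching lower bound $bs(GapThr_{a,b})=\Omega(n/g)$, I would exhibit a single input with many disjoint sensitive blocks, splitting into two cases so that enough positions of the right type exist. If $b\leq n/2$, take any $x$ with $|x|=b$ and partition $g\lfloor (n-b)/g\rfloor$ of its zero-positions into $\lfloor(n-b)/g\rfloor$ blocks of size exactly $g$; flipping any one of them produces a string of weight $b+g=a$, hence is sensitive. If $b>n/2$, then $a>n/2$, and I would symmetrically take $x$ with $|x|=a$ and partition $g\lfloor a/g\rfloor$ of its one-positions into blocks of size $g$, each of whose flip yields weight $a-g=b$. Either case delivers at least $\lfloor n/(2g)\rfloor = \Omega(n/g)$ disjoint sensitive blocks.

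There is no real obstacle here: the proof is a pair of direct counting arguments. The one mild subtlety, and the only reason for any thought, is the case split in the lower bound: when either $b$ or $n-b$ is very small, one side does not supply enough positions to place $\Omega(n/g)$ disjoint blocks of size $g$, so one must choose whether to build ``zero-blocks'' starting from a $0$-input or ``one-blocks'' starting from a $1$-input. Observing that $\max(b,n-b)\geq n/2$ resolves this cleanly.
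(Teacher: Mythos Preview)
Your proof is correct and follows essentially the same direct-counting approach as the paper: exhibit $\Omega(n/g)$ disjoint size-$g$ sensitive blocks at a boundary input for the lower bound, and observe that every sensitive block must have size at least $g$ for the upper bound. Your explicit case split on $b\le n/2$ versus $b>n/2$ (using the $a$ one-positions in the latter case) is in fact more careful than the paper's treatment, which in the $|x|=a$ case writes $\lfloor (n-a)/g\rfloor$ where $\lfloor a/g\rfloor$ is what is actually needed to guarantee that the maximum of the two counts is $\Omega(n/g)$ for all admissible $a,b$.
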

\begin{proof}
Let $x\in\{0,1\}^n$ be such that $GapThr_{a,b}(x)=0$ and suppose that $|x|=b$. To obtain a 1-output from $x$ we need to flip at least $g=a-b$ bits of $x$. Hence, we divide the $n-b$ least significant bits of $x$ in non-intersecting blocks, where each block flips exactly $g$ bits. The number of blocks is $\lfloor\frac{n-b}{a-b}\rfloor$, which is at most $bs_x(GapThr_{a,b})$. To see that $\lfloor\frac{n-b}{a-b}\rfloor$ is the maximum number of such non-intersecting blocks, consider what happens when the size of a block is different from $g$. If the size of a block is less that $g$, then we cannot obtain a 1-output from $x$; if the size of a block is greater than $g$, then the number of blocks decreases. Thus, we have that $bs_x(GapThr_{a,b})=\lfloor\frac{n-b}{g}\rfloor$.

For any $x'$ with $|x'|<b$, we need to flip $a-b$ bits plus $b-|x'|$ bits. Using our argument of the previous paragraph, the size of each block is thus $g+b-|x'|$, and hence, $bs_{x'}(GapThr_{a,b})=\lfloor\frac{n-|x'|}{g+b-|x'|}\rfloor$. Note that $bs_{x'}(GapThr_{a,b})\leq bs_{x}(GapThr_{a,b})$.

For the case when $GapThr_{a,b}(x)=1$ and $|x|=a$, to obtain a 0-output from $x$ we need to flip at least $g$ bits of $x$. Hence the same argument applies, and thus, $bs_x(GapThr_{a,b})=\lfloor\frac{n-a}{g}\rfloor$.

Taking the maximum between the cases when $|x|=b$ and $|x|=a$, we have that $bs(GapThr_{a,b})=\max\{(n-b)/g,(n-a)/g\}=\Theta(n/g)$. \qed
\end{proof}





\bibliographystyle{splncs03}
\bibliography{../../library}

\end{document}